\newcommand{\ud}{\mathrm{d}}
\newcommand{\ue}{\mathrm{e}}
\newcommand{\cD}{{\mathcal D}}
\newcommand{\cH}{{\mathcal H}}
\newcommand{\rz}{{\mathbb R}}
\newcommand{\kz}{{\mathbb C}}
\DeclareMathOperator{\re}{Re}
\DeclareMathOperator{\tr}{Tr}
\newcommand{\eins}{\mathds{1}}
\numberwithin{equation}{section}
\newtheorem{theorem}{Theorem}[section]
\newtheorem{lemma}[theorem]{Lemma}
\theoremstyle{definition}
\newtheorem{defn}[theorem]{Definition}
\numberwithin{equation}{section}
\begin{document} 

\thispagestyle{empty}

\vspace*{1cm}

\begin{center}
  
{\LARGE\bf Instability of Bose-Einstein condensation\\[5mm]
on quantum graphs\\[5mm] 
under repulsive perturbations} \\

\vspace*{2cm}

{\large Jens Bolte}%
\footnote{E-mail address: {\tt jens.bolte@rhul.ac.uk}}

\vspace*{5mm}

Department of Mathematics\\
Royal Holloway, University of London\\
Egham, TW20 0EX\\
United Kingdom\\

\vspace*{1cm}

{\large Joachim Kerner}%
\footnote{E-mail address: {\tt Joachim.Kerner@fernuni-hagen.de}}

\vspace*{5mm}

Abteilung Analysis\\
FernUniversit\"{a}t in Hagen\\
58084 Hagen\\
Germany\\

\end{center}

\vfill
 
\begin{abstract} 
In this Note we investigate Bose-Einstein condensation in interacting quantum
many-particle systems on graphs. We extend previous results obtained for 
particles on an interval and show that even arbitrarily small repulsive 
two-particle interactions destroy a condensate in the non-interacting Bose gas. 
Our results also cover singular two-particle interactions, such as the 
well-known Lieb-Lininger model, in the thermodynamic limit.
\end{abstract}

\noindent
{\bf\small MSC classification:} 82D50, 81V70, 46N50\\
\noindent
{\bf\small Keywords:} Bose-Einstein condensation, quantum graphs, Lieb-Liniger 
model
\newpage

\section{Introduction}
Since its prediction by Einstein \cite{EinsteinBECI,EinsteinBECII}, Bose-Einstein
condensation (BEC) has played an important role in many-particle quantum 
mechanics. From a physical point of view, the condensation refers to the fact 
that an infinite number of particles occupy the same one-particle state and that
this macroscopic occupation of the same state induces a coherent behaviour in 
the many-particle system bearing some resemblance with a superconducting 
behaviour. Indeed, this observation led to the well-known Bogoliubov mean-field 
model of BEC (see, e.g., \cite{ZagBru}). On the other hand, from a mathematical 
point of view the difficulties to establish BEC in interacting systems were soon
realised. This applies, in particular, to continuous systems and only recently 
rigorous results have been obtained for them 
\cite{BSP83,LiebSeiringerProof,LVZ03}. 

A related question to proving BEC in an interacting system is the following: 
Suppose that a non-interacting many-particle system shows BEC as, e.g., a free 
gas in a three-dimensional box. The question then arises, whether this 
condensation is stable under perturbing the free gas by introducing repulsive 
particle interactions. In a number of cases it had been shown that hardcore 
interactions destroy BEC \cite{BP85,BP86,PuleAonghusa87}, and this was recently 
confirmed for many-particle quantum systems on (finite, metric) graphs 
\cite{BolteKernerBEC}. However, since hardcore interactions can be viewed 
as very strong one might wonder whether small repulsive interactions can be 
implemented such that the condensation survives. It appears that the answer 
strongly depends on the type of condensation in the free system. Whereas 
the results of \cite{LVZ03} show that under some circumstances BEC is stable 
with respect to superstable repulsive two-particle interactions, it was shown in
another example \cite{PS86} that even small repulsive interactions destroy the 
condensate. The reason for this very different behaviour lies in the nature of 
the one-particle ground state of the free gas. Whereas in the model of 
\cite{LVZ03} this ground state is a plane wave and, hence, completely 
delocalised, the ground state in the model of \cite{PS86} is localised at the 
boundary of the system. Intuitively, this explains why the effect of repulsive 
interactions is much stronger in the latter system, leading to the destruction 
of the condensate.

It is our goal in this Note to prove that BEC of a free gas on a quantum
graph is unstable under the addition of a fairly general repulsive two-particle 
interaction. Quantum graphs are (ramified) one-dimensional models with a 
potentially complex topology. Although quantum graphs are studied in various 
areas of mathematics and physics (see, e.g., \cite{BerKuc13}), they are 
particularly prominent in the field of quantum chaos \cite{KS97}. This is due 
to the fact that the spectral correlations of sufficiently complex quantum 
graphs follow the Bohigas-Gianonni-Schmit conjecture \cite{BohGiaSch84} and are 
well described by random matrix theory. Contrary to what is often believed, BEC 
can occur in a free gas in one dimension when attractive boundary conditions are
chosen \cite{LanWil79}. In a similar spirit, a free Bose gas on a finite graph 
can show BEC at finite temperature. We previously identified the class of 
boundary conditions in the vertices of the graph that lead to BEC of a free gas 
and showed that any condensation (in terms of singularities of the free energy) 
is destroyed by adding hardcore two-particle interactions \cite{BolteKernerBEC}.
In this Note we now allow repulsive two-particle interactions to be generated 
by a potential in the same way as in \cite{PS86}, where the case of an interval 
(a graph with one edge in our language) with attractive boundary conditions at 
one end was investigated. However, here we consider arbitrary (finite) graphs 
and also include interactions approaching the Lieb-Liniger model \cite{LL63} 
(constructed on graphs in \cite{BKContact}) in the thermodynamic limit. In all 
cases we prove that BEC into the one-particle ground state is destroyed by the 
interactions at any finite temperature.

It is important to note that besides being one of the few explicitly solvable 
many-particle models, rigorous results concerning BEC in the Lieb-Liniger model 
are scarce. Recently, e.g., condensation in the Lieb-Liniger model on an 
interval with additional random potentials was discussed in \cite{SeiYngZag12}, 
proving condensation at zero temperature in a Gross-Pitaevskii regime. The 
thermodynamic limit employed in \cite{SeiYngZag12} is a high-density limit 
since the volume of the one-particle configuration space is not changed. In 
this Note, however, we study BEC on general (finite) graphs at finite 
temperature in the standard thermodynamic limit with fixed density.

\section{Background}
\label{sec1}
In this section we briefly summarise relevant concepts of one-particle and 
many-particle quantum graphs, as well as Bose-Einstein condensation. For more 
details on BEC see~\cite{PO56,Lieetal05,VerbeureBook}, on quantum graphs 
see~\cite{Kuc04,KS06,GnuSmi07,BolEnd09} and on many-particle quantum graphs 
see~\cite{BKSingular,BKContact}. For a discussion of BEC on quantum graphs 
see~\cite{BolteKernerBEC}.

Let $\Gamma=(\mathcal{V},\mathcal{E})$ be a finite graph with vertices 
$\mathcal{V}=\{v_1,\dots,v_V\}$ and edges $\mathcal{E}=\{e_1,\dots,e_E\}$ 
connecting the vertices. The graph is equipped with a metric structure by
assigning a (finite) length $l_e>0$ to each edge $e \in \mathcal{E}$. Hence,
each edge $e$ is associated with an interval $[0,l_e]$, and this allows us to 
define the one-particle Hilbert space, 
\begin{equation}
\label{QGHilbert1}
 \cH_1=\bigoplus_{e\in\mathcal{E}}L^2(0,l_e)\ .
\end{equation}
In order to obtain a quantum graph one introduces a self-adjoint realisation of 
the Laplacian in $\cH_1$. As a differential expression the Laplacian acts on 
smooth functions $F=(f_1,\dots,f_E) \in \cH_1$ as
\begin{equation}
\label{LaplDiff}
 -\Delta_1 F=(-f^{\prime \prime}_1,\dots,-f^{\prime \prime}_E)\ .
\end{equation}
Here the index $1$ refers to the fact that \eqref{LaplDiff} is a one-particle 
operator which, in our case, serves as the one-particle Hamiltonian. 

There are several ways to characterise self-adjoint realisations of the 
differential expression \eqref{LaplDiff}, see \cite{KosSch99,Kuc04}. In the 
following we shall refer to the approach developed in \cite{Kuc04}. This 
characterises the domains $\cD_1(P_1,L_1)\subset\cH_1$ on which $-\Delta_1$ 
is self-adjoint in terms of two linear maps $P_1,L_1$ on $\kz^{2E}$, where 
$P_1$ is a projection and $L_1$ is a self-adjoint endomorphism on $\ker P_1$.
These maps act on the boundary values of functions and their derivatives on
the edges and hence implement the connectivity of the graph. Any self-adjoint 
realisation of the one-particle Laplacian has compact resolvent. Its spectrum, 
therefore, is purely discrete, with an eigenvalue count following a Weyl 
asymptotics. Moreover, there are at most finitely many negative eigenvalues, 
whose number is bounded by the number of positive eigenvalues of $L_1$ 
\cite{KS06}. 

Following the usual construction, the bosonic $N$-particle Hilbert space is 
the symmetrised $N$-fold tensor product of the one-particle Hilbert space
\eqref{QGHilbert1}, i.e., $\cH_B^N=\cH_1 \otimes_s\dots\otimes_s\cH_1$. 
Accordingly, the $N$-particle Hamiltonian is given by 
\begin{equation}
\label{NLaplace}
 -\Delta_N=\sum_{j=1}^{N}\eins \otimes \dots \otimes (-\Delta_1) \otimes \dots 
 \otimes \eins \ .
\end{equation}
As a differential expression this is a Laplacian in $N$ variables. A number of 
self-adjoint realisations of $-\Delta_N$ are discussed in 
\cite{BKSingular,BKContact}, including non-interacting as well as interacting
ones. Non-interacting realisations $(-\Delta_N,\cD_N(P_1,L_1))$ follow from a 
tensor product construction, where the domain $\cD_N(P_1,L_1)$ is an $N$-fold 
tensor power of the one-particle domain $\cD_1(P_1,L_1)$. An important 
consequence is that the spectrum of $(-\Delta_N,\cD_N(P_1,L_1))$ consists of
eigenvalues that are sums of one-particle eigenvalues. For more details see
\cite{BolteKernerBEC}. Among the interacting realisations of the $N$-particle
Laplacian developed in \cite{BKContact} is a rigorous version of
\begin{equation}
\label{LL}
 H_N = -\Delta_N + \alpha\sum_{i<j}\delta(x^{(i)} -x^{(j)})\ ,
\end{equation}
defining a Lieb-Liniger model (see \cite{LL63}) on a graph. Here $i,j$ are
particle labels attached to coordinates on the same edge. In this 
context $-\Delta_N$ stands for a non-interacting realisation of the Laplacian; 
the interaction is a singular two-particle contact interaction, as indicated 
by the $\delta$-potentials.

In bosonic many-particle systems, BEC refers to the macroscopic occupation of a 
one-particle state. We work in the canonical ensemble, i.e., with a fixed 
particle number $N$ and Hilbert space $\cH_B^N$. The expectation value of a 
(bounded) observable $A_N$ in the Gibbs state $\omega_\beta$ at inverse 
temperature $\beta$ therefore is
\begin{equation}
\label{Gibbs}
 \omega_{\beta}(A_N) = \frac{1}{Z_N(\beta)}\tr(A_N e^{-\beta H_N})\ ,
\end{equation}
where $Z_N(\beta)=\tr(e^{-\beta H_N})$ is the partition function and $H_N$ is the 
$N$-particle Hamiltonian operator that is assumed to preserve the particle 
number. Condensation only takes place in the thermodynamic limit, which in the 
canonical ensemble is obtained by letting the volume of the one-particle 
configuration space tend to infinity while keeping the particle density constant 
\cite{GallavottiBook}. In the context of quantum graphs, the volume is the total 
length $\mathcal{L}=\sum_{e=1}^{E}l_e$ of the graph.  
\begin{defn} 
In a quantum graph the thermodynamic limit is obtained by rescaling each edge 
length $l_e$ as $nl_e$, and letting $n \to \infty$. At the same time 
the number of particles $N$ is increased such that the particle density 
$\rho=N/\mathcal{L}$ remains constant.
\end{defn}
We denote the thermodynamic limit by writing $\mathcal{L} \to \infty$.

Although, strictly speaking, not needed in the canonical ensemble, some tools 
of second quantisation will be useful in the following. Hence, we let 
$\mathcal{F}_B$ be the bosonic Fock space over the one-particle Hilbert space 
$\cH_1$ defined in \eqref{QGHilbert1}. When 
$\Phi=(\varphi_e)_{e\in\mathcal{E}}\in \cH_1$, the standard annihilation and 
creation operators in $\mathcal{F}_B$ are $a(\Phi)$ and $a^\ast(\Phi)$. They can 
be represented as
\begin{equation}
\begin{split}
\label{aadagger}
 a(\Phi)  &= \sum_{e=1}^{E}\int_{0}^{l_e}\overline{\varphi_e}(x)a_e(x)\ \ud x\ , \\
 a^\ast(\Phi)  &=\sum_{e=1}^{E}\int_{0}^{l_e}\varphi_e(x)a^\ast_e(x)\ \ud x\ ,
\end{split}
\end{equation} 
with operator-valued distributions $a_e$ and $a^\ast_e$ attached to each edge
that satisfy the CCR
\begin{equation}
\label{CCR}
 [a_e(x),a_{e'}(y)]=0=[a^\ast_e(x),a^\ast_{e'}(y)]\ ,\qquad
 [a_e(x),a^\ast_{e'}(y)]=\delta_{ee'}\delta(x-y)\ .
\end{equation} 
The particle number operator $\hat{N}$ can be expressed as
\begin{equation}
\hat{N}=\sum_{e=1}^{E}\int_{0}^{l_e} a^\ast_e(x)a_e(x)\ \ud x\ .
\end{equation}
If the $N$-particle system is in the Gibbs state \eqref{Gibbs}, the number of
particles in a given one-particle state $\Phi\in\cH_1$ is 
$\omega_{\beta}(a^*(\Phi)a(\Phi))$.
\begin{defn} 
Let $\Phi\in\cH_1$ be any (normalised, pure) one-particle state. We say 
that Bose-Einstein condensation is exhibited in the state $\Phi$ at 
inverse temperature $\beta > 0$, if
\begin{equation}\label{ParticleDensity}
 \limsup_{\mathcal{L}\to\infty}\frac{\omega_{\beta}(a^*(\Phi)a(\Phi))}{\mathcal{L}} 
 > 0\ .
\end{equation}
\end{defn}
In \cite{BolteKernerBEC} non-interacting Bose gases on quantum graphs were
classified according to whether or not they show BEC: 
\begin{theorem}
\label{TheoremNonInteracting} 
Let $(-\Delta_1,\cD_1(P_1,L_1))$ be a one-particle Laplacian on a graph and
denote the associated non-interacting $N$-particle Laplacian by 
$(-\Delta_N,\cD_N(P_1,L_1))$. Then BEC occurs for $(-\Delta_N,\cD_N(P_1,L_1))$
below some critical temperature, if and only if the map $L_1$ has at least one 
positive eigenvalue.
\end{theorem}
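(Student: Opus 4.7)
My plan is to reduce the question to a one-particle spectral problem and then to analyse the $1$-dimensional density of states of $-\Delta_1$ in the thermodynamic limit.

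First I would exploit the fact that the non-interacting Hamiltonian $(-\Delta_N,\cD_N(P_1,L_1))$ is a sum of commuting one-particle Laplacians, so its Gibbs state diagonalises in the occupation-number basis of a complete system of one-particle eigenfunctions $\{\varphi_k\}$ with eigenvalues $\{E_k\}$. Writing $\Phi=\sum_k c_k\varphi_k$ and using $\om_\be(a^\ast(\varphi_k)a(\varphi_l))=\delta_{kl}n_k(\be)$ yields
\begin{equation*}
 \om_\be\bigl(a^\ast(\Phi)a(\Phi)\bigr)=\sum_k|c_k|^2\,n_k(\be).
\end{equation*}
Standard equivalence-of-ensembles arguments for the ideal Bose gas (as employed in \cite{BolteKernerBEC}) identify $n_k(\be)=\bigl(e^{\be(E_k-\mu)}-1\bigr)^{-1}$ in the limit, with chemical potential $\mu\le E_0$ determined by the density constraint $\rho=N/\mathcal{L}$. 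In particular, BEC in some one-particle state $\Phi$ requires that at least one $n_k(\be)$ grow linearly in $\mathcal{L}$, and since $\sum_k|c_k|^2=1$ for normalised $\Phi$, only $\varphi_0$ can carry the condensate.

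Next I would analyse the spectrum of $(-\Delta_1,\cD_1(P_1,L_1))$ under the rescaling $l_e\mapsto nl_e$. Two features drive the argument. (a) The positive eigenvalues fill out a continuum whose integrated density obeys the one-dimensional Weyl law,
\begin{equation*}
 \frac{1}{\mathcal{L}_n}\,\#\{k:0<E_k\le E\}\longrightarrow\frac{\sqrt{E}}{\pi}\qquad\text{as }n\to\infty.
\end{equation*}
(b) The number of negative eigenvalues of $-\Delta_1$ is bounded by the number of positive eigenvalues of $L_1$ uniformly in $n$, and each such eigenvalue corresponds to a boundary-bound state whose secular equation, in the limit $n\to\infty$, reduces to a bound-state equation on a graph of half-lines determined solely by $(P_1,L_1)$. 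Consequently, if $L_1$ possesses a positive eigenvalue then $-\Delta_1$ has a ground-state eigenvalue $E_0\to E_0^\infty<0$, whereas otherwise the infimum of the spectrum converges to $0$.

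The dichotomy then follows from the critical-density integral
\begin{equation*}
 \rho_c(\be):=\lim_{\mathcal{L}\to\infty}\frac{1}{\mathcal{L}}\sum_{k\ge 1}\frac{1}{e^{\be(E_k-E_0)}-1}
   =\int_{0}^{\infty}\frac{\ud E}{\pi\sqrt{E}\bigl(e^{\be(E-E_0^\infty)}-1\bigr)}.
\end{equation*}
If $E_0^\infty<0$, the factor $e^{\be(E-E_0^\infty)}-1\ge 1-e^{\be E_0^\infty}>0$ tames the $E^{-1/2}$ singularity at the origin and $\rho_c(\be)<\infty$; for any $\rho>\rho_c(\be)$ the excess density must accumulate in the ground state, giving $\om_\be(a^\ast(\varphi_0)a(\varphi_0))/\mathcal{L}\to\rho-\rho_c(\be)>0$, i.e.\ BEC below the critical temperature implicitly defined by $\rho=\rho_c(\be)$. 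If instead $E_0^\infty=0$, the integrand behaves as $(\pi\be)^{-1}E^{-3/2}$ near $E=0$, so $\rho_c(\be)=+\infty$ for every $\be>0$; then $\mu$ stays bounded away from $E_0$, every $n_k(\be)$ is $O(1)$, and $\om_\be(a^\ast(\Phi)a(\Phi))\le\sup_k n_k(\be)$ is uniformly bounded in $\mathcal{L}$ for each normalised $\Phi$, ruling out BEC.

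The main technical obstacle is part (b): controlling the finitely many negative one-particle eigenvalues uniformly in $n$ and verifying that the spectral gap $|E_0|$ is bounded away from zero as the graph is scaled up. Concretely, one must track the secular determinant associated with $(P_1,L_1)$ and show that its negative-energy zeros converge to those of the corresponding half-line problem, with the bound-state wave functions decaying exponentially into the bulk at a rate fixed by $(P_1,L_1)$. The positive-spectrum Weyl asymptotic in (a), together with a dominated-convergence passage to the integral above, is comparatively standard.
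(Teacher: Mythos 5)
First, a point of reference: the paper does not prove Theorem~\ref{TheoremNonInteracting} at all --- it is imported verbatim from \cite{BolteKernerBEC}, and the only in-paper material is the paragraph following it, which describes the mechanism (finitely many negative one-particle eigenvalues, controlled by the positive eigenvalues of $L_1$, with the ground-state energy converging to $-L_{\max}^2$ so that a spectral gap survives the thermodynamic limit). Your outline is consistent with that mechanism, and your identification of the convergence of the negative-energy secular zeros to the half-line problem as the main technical point is exactly what Lemma~3.3 of \cite{BolteKernerBEC} supplies.

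As a standalone proof, however, there is a genuine gap in your ``only if'' direction. When $E_0^\infty=0$ the critical-density integral diverges, so your ``dominated-convergence passage to the integral'' is not available, and the chain ``$\rho_c(\be)=\infty$ $\Rightarrow$ $\mu$ stays bounded away from $E_0$ $\Rightarrow$ every $n_k(\be)=O(1)$'' is asserted rather than derived --- in the gapless case $E_0\to 0$ and $\mu\to 0$ as well, so boundedness away is exactly what must be proved. The standard repair uses a finite-volume estimate: the low-lying eigenvalues obey $E_k-E_0\sim \pi^2k^2/\mathcal{L}^2$, so if $E_0-\mu_\mathcal{L}=O(1/\mathcal{L})$ then
\begin{equation*}
 \frac{1}{\mathcal{L}}\sum_{k\ge 1}\frac{1}{e^{\be(E_k-\mu_\mathcal{L})}-1}
 \;\gtrsim\;\frac{\mathcal{L}}{\be\pi^2}\sum_{k\ge1}\frac{1}{k^2}\longrightarrow\infty ,
\end{equation*}
contradicting the fixed density; hence $\mathcal{L}(E_0-\mu_\mathcal{L})\to\infty$ and $n_0=o(\mathcal{L})$. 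That quantitative level-spacing input (a refinement of the Weyl law near the bottom of the spectrum, not just the integrated density of states) is what your argument is missing. Two smaller issues: the ensemble is canonical, so the identification $n_k=(e^{\be(E_k-\mu)}-1)^{-1}$ requires an equivalence-of-ensembles argument you only gesture at; and the claim that ``only $\varphi_0$ can carry the condensate'' is both unnecessary for the theorem and false in general when several negative eigenvalues converge to the same limit, though by pigeonhole some state in the low-lying span is still macroscopically occupied, which is all the ``if'' direction needs.
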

The key mechanism that leads to condensation in the non-interacting Bose gas is 
a gap in the one-particle spectrum that separates a finite number of states, 
in particular the ground state, from the states of positive energy. It is known
that there can only be negative eigenvalues of the one-particle Laplacian
when $L_1$ is not negative semi-definite, hence the requirement for $L_1$ to
possess a positive eigenvalue. Furthermore, according to Lemma~3.3 in 
\cite{BolteKernerBEC}, the ground state eigenvalue converges to $-L^2_{max}$ in 
the thermodynamic limit, where $L_{max}$ is the largest eigenvalue of $L_1$. 
Therefore, the essential condition for BEC is a gap in the one-particle spectrum 
that persists in the thermodynamic limit.

It is generally assumed that a gap in the one-particle spectrum stabilises 
condensation, or even makes it possible at all \cite{LVZ03}. This is true, in 
particular, for a one-dimensional Bose gas since no condensation is present for 
standard boundary conditions as expressed by the well-known result of 
Hohenberg~\cite{Hohenberg}. In three dimensions, where condensation occurs 
without a gap, an additional gap stabilises the condensate. As an example, in 
\cite{LVZ03} an artificial gap was introduced, and it was shown that a certain 
class of repulsive two-particle interactions does not destroy the condensate. 
However, although a gap in the one-particle spectrum generally stabilises 
condensation, it still may not be possible to implement repulsive two-particle 
interactions without destroying the condensate. An example for this was studied 
in \cite{PS86}. The reason for the differences in the examples of \cite{LVZ03} 
and \cite{PS86} lies in the strongly localised nature of the ground state of 
the model studied in \cite{PS86}. In the cases covered by 
Theorem~\ref{TheoremNonInteracting} where BEC occurs, the ground states are
localised around the vertices of the graph. One thus expects a similar behaviour
to the one found in \cite{PS86}.

\section{Results}
In this section we start with non-interacting Bose gases that, according to 
Theorem~\ref{TheoremNonInteracting}, show BEC and then investigate the effect of 
additional, repulsive two-particle interactions. Working in the canonical 
ensemble we shall define a restriction of the $N$-particle Hamiltonian to the 
(finite) one-particle configuration space \cite{GallavottiBook} and then 
investigate the particle number density \eqref{ParticleDensity} of the 
one-particle ground state in the limit $\mathcal{L} \rightarrow \infty$. More 
explicitly, the $N$-particle Hamiltonian is given by 
\begin{equation}
 H_N=-\Delta_N + U^{\mathcal{L}}_{N}\ ,
\end{equation} 
where $(-\Delta_N,\cD_N(P_1,L_1))$ is such that $L_1$ has at least one positive
eigenvalue. Therefore, according to Theorem~\ref{TheoremNonInteracting}, the 
free Bose gas with Hamiltonian $-\Delta_N$ shows BEC below a critical 
temperature. The interaction potential $U^{\mathcal{L}}_{N}$ is defined in terms of 
a function $U_{\mathcal{L}}:\rz \rightarrow \rz_+$ such that, in the language of 
second quantisation,
\begin{equation}\label{SecondQuantisedHamiltonian}
 U_N^{\mathcal{L}} = \frac{1}{2}\sum_e\int_{0}^{l_e} \int_{0}^{l_e}a_e^\ast(x)a_e^\ast(y)
 U_{\mathcal{L}}(x-y)a_e(x)a_e(y)\ \ud x\,\ud y\ ,
\end{equation} 
see also \eqref{aadagger} and \eqref{CCR}. We require the function 
$U_{\mathcal{L}}$ to be (i) positive, in order to generate repulsive interactions, 
and (ii) to result in a Kato-Rellich perturbation $U_N^{\mathcal{L}}$ of 
$-\Delta_N$, so that $H_N$ is self-adjoint on the domain $\cD_N(P_1,L_1)$.
Furthermore, we assume that
\begin{enumerate}
\item $\|U_{\mathcal{L}}\|_{L^1(\rz)}$ is finite and independent of $\mathcal{L}$,
\item For all $\mathcal{L}>0$ there exists $A_\mathcal{L},\epsilon_\mathcal{L}>0$ 
such that $U_\mathcal{L}(x) \geq \epsilon_\mathcal{L}$ for all 
$x \in [-A_\mathcal{L},+A_\mathcal{L}]$.
\end{enumerate}
More specifically, we require that $A_\mathcal{L}$ is either independent of 
$\mathcal{L}$, or that $\lim_{\mathcal{L}\to \infty}A_\mathcal{L}=0$. These families 
of potentials include $\delta$-sequences, e.g., of the form 
$U_\mathcal{L}(x):=\mathcal{L}V(\mathcal{L}x)$ with $V\in C_0^\infty(\rz)$, 
$V\geq 0$ and $\|V\|_{L^1(\rz)}=\alpha>0$, so that 
$\lim_{\mathcal{L}\to\infty}U_\mathcal{L}(x)=\alpha\delta(x)$. This implies that in 
the thermodynamic limit $\mathcal{L}\to\infty$ we may include Lieb-Liniger 
models, see \eqref{LL} and \cite{BKContact}.

The following result is adapted from \cite{PS86}
\begin{lemma}
\label{LemmaEnergy} 
Let $U_\mathcal{L}$ be a sequence of potentials with the properties described 
above. Then the energy density remains finite in the thermodynamic limit, 
i.e.,
\begin{equation}
\label{EquationEnergy}
 \limsup_{\mathcal{L}\to\infty}\frac{\omega_{\beta}(H_N)}{\mathcal{L}} < \infty\ .
\end{equation} 
\end{lemma}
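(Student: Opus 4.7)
The plan is to use a trial-state lower bound on $\log Z_N(\beta)$ together with the convexity of $\beta\mapsto\log Z_N(\beta)$. First I would construct an $N$-particle trial vector $\psi$ with extensive energy $\langle\psi,H_N\psi\rangle=O(\mathcal{L})$, which yields an extensive lower bound on $\log Z_N(\beta)$. The identity $\omega_\beta(H_N)=-\partial_\beta\log Z_N(\beta)$ combined with convexity of $\log Z_N$ in $\beta$, together with the trivial inequality $Z_N\le Z_N^{\mathrm{free}}$ coming from $U_N^{\mathcal{L}}\ge 0$, then converts this into the desired extensive upper bound on $\omega_\beta(H_N)$.

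For the trial state I take $\psi=\phi^{\otimes N}$, where $\phi$ is a normalised one-particle state that is spread uniformly over the bulk of each rescaled edge and smoothly cut off near the vertices. Concretely, on the edge of length $nl_e$ set $\phi_e(x)=\mathcal{L}^{-1/2}g_e(x)$ with $g_e\equiv 1$ on $[\ell_0,nl_e-\ell_0]$ for some fixed $\ell_0>0$, and with $g_e(0)=g_e'(0)=g_e(nl_e)=g_e'(nl_e)=0$. The double vanishing ensures that $\phi\in\cD_1(P_1,L_1)$ for every self-adjoint realisation, and hence $\psi\in\cD_N(P_1,L_1)$. One then checks
\begin{equation*}
 N\,\langle\phi,-\Delta_1\phi\rangle=O(1),\qquad
 \langle\psi,U_N^{\mathcal{L}}\psi\rangle\le\binom{N}{2}\frac{\|U_\mathcal{L}\|_{L^1}}{\mathcal{L}}=O(\mathcal{L}),
\end{equation*}
where the second estimate uses hypothesis~(i) together with $\int\!\!\int_{[0,nl_e]^2}U_\mathcal{L}(x-y)\,\ud x\,\ud y\le nl_e\,\|U_\mathcal{L}\|_{L^1}$. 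Hence $\langle\psi,H_N\psi\rangle\le C\mathcal{L}$ uniformly in $\mathcal{L}$.

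Applying Jensen's inequality to $t\mapsto e^{-\beta t}$ and the spectral measure of $H_N$ in the state $\psi$ gives $Z_N(\beta)\ge\langle\psi,e^{-\beta H_N}\psi\rangle\ge e^{-\beta\langle\psi,H_N\psi\rangle}$, so $\log Z_N(\beta)\ge-\beta C\mathcal{L}$. Since $\log Z_N$ is convex with derivative $-\omega_\beta(H_N)$, for any fixed $\beta_1\in(0,\beta)$,
\begin{equation*}
 \omega_\beta(H_N)\le\frac{\log Z_N(\beta_1)-\log Z_N(\beta)}{\beta-\beta_1}\le\frac{\log Z_N^{\mathrm{free}}(\beta_1)+\beta C\mathcal{L}}{\beta-\beta_1},
\end{equation*}
where the last inequality uses $\log Z_N(\beta_1)\le\log Z_N^{\mathrm{free}}(\beta_1)$. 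Standard extensivity of the free bosonic partition function on the graph (a consequence of the Weyl asymptotics for $-\Delta_1$ already used in the proof of Theorem~\ref{TheoremNonInteracting}) gives $\log Z_N^{\mathrm{free}}(\beta_1)=O(\mathcal{L})$, so $\omega_\beta(H_N)/\mathcal{L}$ is bounded and \eqref{EquationEnergy} follows.

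The main obstacle is the choice of $\phi$: it must simultaneously (a) belong to $\cD_1(P_1,L_1)$ for arbitrary self-adjoint boundary conditions, which forces the double vanishing at each vertex, and (b) have a density so uniformly spread that the $N^2\sim\mathcal{L}^2$ interaction pairs aggregate only to an extensive energy. Any attempt to use the one-particle ground state as $\phi$ would satisfy (a) but fail (b) dramatically, since that ground state concentrates at the vertices and would give an $O(\mathcal{L}^2)$ interaction energy; this concentration is precisely the mechanism expected to destabilise the condensate in the remainder of the paper. Because only $\|U_\mathcal{L}\|_{L^1}$ enters the bound, the estimate is uniform as $U_\mathcal{L}$ approaches the $\delta$-function of the Lieb--Liniger limit.
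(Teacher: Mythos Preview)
Your proposal is correct and follows essentially the same route as the paper's own proof: a product trial state $\phi^{\otimes N}$ built from a one-particle function that is uniformly spread over each edge and vanishes near the vertices, the Jensen/variational bound $Z_N(\beta)\ge e^{-\beta\langle\psi,H_N\psi\rangle}$, the comparison $Z_N\le Z_N^{\mathrm{free}}$ from $U_N^{\mathcal{L}}\ge 0$, and the convexity of $\beta\mapsto\log Z_N(\beta)$ to extract the energy bound. The paper's trial function is $\varphi_e\approx (El_e)^{-1/2}$ on the bulk of each edge with $\|\varphi_e'\|_{L^2}^2\le c_1$, which is the same object as your $\phi_e=\mathcal{L}^{-1/2}g_e$ up to the precise normalisation convention; your requirement of double vanishing at the endpoints is a slightly cleaner way to guarantee $\phi\in\cD_1(P_1,L_1)$ than the paper's ``supported in $(0,l_e)$'', and your kinetic estimate $N\langle\phi,-\Delta_1\phi\rangle=O(1)$ is marginally sharper than the paper's $O(N)$, but neither difference is material.
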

\begin{proof}
The proof follows the strategy outlined in \cite{PS86}. It uses a normalised 
one-particle vector $\Phi\in\cH_1$ such that each component 
$(\Phi)_e=\varphi_e\in H^1(0,l_e)$ is supported in $(0,l_e)$, bounded in 
absolute value by $\frac{1}{\sqrt{El_e}}$, and equal to 
$\frac{d_e}{\sqrt{El_e}}$ on the interval $[a,l_e-a]$ for some $a>0$ with 
$d_e \rightarrow 1$ as $l_e \rightarrow \infty$.

Furthermore, we require $\Phi$ to be such that there is a constant $c_1 > 0$ 
with 
\begin{equation}
 \|\nabla\varphi_e\|^2_{L^2(0,l_e)} \leq c_1\ ,\quad \forall e \in \mathcal{E}\ .
\end{equation} 
Due to the repulsive nature of the potential one has
\begin{equation}
 f_{\mathcal{L}}(\beta)=\frac{1}{\beta \mathcal{L}}\log \tr(e^{-\beta H_N})
 \leq f^0_{\mathcal{L}}(\beta)=\frac{1}{\beta \mathcal{L}}\log\tr(e^{\beta \Delta_N})\ .
\end{equation} 
Defining $\Psi_N=\Phi\otimes\dots\otimes\Phi$ one gets 
\begin{equation}
\begin{split}
 \tr(e^{-\beta H_N}) &\geq e^{-\beta \langle \Psi_N,H_N\Psi_N\rangle_{\cH_N}} \\
                  &\geq e^{-\beta(NEc_1+c_2\frac{N(N-1)}{2\mathcal{L}}\|U_\mathcal{L}\|_{L^1})}\ ,
\end{split}
\end{equation} 
where $c_2> 0$ is a constant. As a consequence,
\begin{equation}
\label{ReferenceProof1a}
 -\Bigl(Ec_1\rho+c_2\frac{\rho^2}{2}\Bigr)-\epsilon \leq 
 f_{\mathcal{L}}(\beta) \leq f^0_{\mathcal{L}}(\beta)
\end{equation} 
for $\mathcal{L}$ large enough, with some $\epsilon > 0$. Finally, using 
the convexity of $f_{\mathcal{L}}(\beta)$,
\begin{equation}
 -\frac{\omega_{\beta}(H_N)}{\mathcal{L}} = 
  \frac{\ud f_{\mathcal{L}}}{\ud \beta}(\beta) 
  \geq \frac{f_{\mathcal{L}}(\beta)-f_{\mathcal{L}}(\beta-\delta)}{\delta}\ ,
\end{equation} 
the Lemma then follows using the bounds \eqref{ReferenceProof1a} and taking 
into account that \linebreak$\limsup_{\mathcal{L}\to\infty}f^0_{\mathcal{L}}(\beta)$ 
exists. The latter property follows from a bracketing argument and the explicit
knowledge of the eigenvalues for Dirichlet and Neumann vertex conditions 
(see, e.g., \cite{BKSingular}). 
\end{proof}
Following \cite{PS86}, the general idea is to show that BEC into the 
one-particle ground state, after repulsive interactions are switched on, would 
contradict Lemma~\ref{LemmaEnergy}. 
\begin{lemma}
\label{LemmaCutoff} 
Let $\Phi=(\varphi_e)_{e\in\mathcal{E}} \in \cH_1$ be a pure one-particle state. 
Define $\Phi_1$ and $\Phi_2$ as
\begin{equation}
\begin{split}
 (\Phi_1)_e&:=\varphi_e\chi_{[0,l^{\delta}_{\min}]}, \\
 (\Phi_2)_e&:=\varphi_e\chi_{[l_e-l^{\delta}_{\min},l_e]}, 
\end{split}
\end{equation} 
where $\delta < \frac{1}{3}$ is some constant, $l_{\min}$ is the shortest edge
length and $\chi_I$ is the characteristic function of the interval $I$. Then, 
given that the potential $U_{\mathcal{L}}$ described above is such that 
$\epsilon_{\mathcal{L}}$ and $A_{\mathcal{L}}$ are both constant or 
$\epsilon_{\mathcal{L}}A_{\mathcal{L}}^3=O(\mathcal{L}^{3\delta+\gamma-1})$ with 
$\gamma < 1-3\delta$, one has
\begin{equation}
\label{partdense}
 \limsup_{\mathcal{L} \to \infty}
 \frac{\omega_{\beta}(a^\ast(\Phi_j)a(\Phi_j))}{\mathcal{L}}=0\ , \quad j=1,2\ .
\end{equation}
\end{lemma}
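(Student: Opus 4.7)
The strategy, following~\cite{PS86}, is to argue by contradiction against Lemma~\ref{LemmaEnergy}: if a positive fraction of the $N\sim\rho\mathcal{L}$ particles were concentrated in the cut-off mode $\Phi_j$, then the resulting repulsive pair interactions would force the interaction energy density to diverge, contradicting \eqref{EquationEnergy}. Assume, then, that along some subsequence $\mathcal{L}\to\infty$ one has $n_j:=\omega_\beta(a^\ast(\Phi_j)a(\Phi_j))\geq c\mathcal{L}$ for some $c>0$ and, say, $j=1$. Setting $b:=a(\Phi_j)$ and $\hat n:=b^\ast b$, Cauchy--Schwarz in the Gibbs state gives $\omega_\beta(\hat n^{2})\geq\omega_\beta(\hat n)^{2}\geq c^{2}\mathcal{L}^{2}$, and from the CCR \eqref{CCR} one has $b^\ast b^\ast bb=\hat n^{2}-\|\Phi_j\|^{2}\hat n$ with $\|\Phi_j\|\leq 1$, so $\omega_\beta(b^\ast b^\ast bb)\geq\tfrac{1}{2}c^{2}\mathcal{L}^{2}$ for $\mathcal{L}$ large.

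The key step is to establish an operator lower bound of the schematic form
\[
 U_N^\mathcal{L}\;\geq\;K_\mathcal{L}\,b^\ast b^\ast bb\;-\;R_\mathcal{L},
\]
where $R_\mathcal{L}$ is subextensive and $K_\mathcal{L}$ is an effective pair coupling restricted to the mode $\Phi_j$. To derive it I would start from the pointwise bound $U_\mathcal{L}(x-y)\geq\epsilon_\mathcal{L}\chi_{[-A_\mathcal{L},A_\mathcal{L}]}(x-y)$, dominate the characteristic function by a nonnegative convolution kernel $c' A_\mathcal{L}^{-1}\int\phi(x-z)\phi(y-z)\,\ud z$ with $\phi$ a smooth bump of width $A_\mathcal{L}$, substitute into \eqref{SecondQuantisedHamiltonian}, and use the CCR to express the result through smeared number operators. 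Projecting the resulting two-body quadratic form onto the one-dimensional subspace spanned by $\Phi_j$ and tracking the three length scales involved---$A_\mathcal{L}$, the width of $\phi$, and $l^{\delta}_{\min}$---should yield $K_\mathcal{L}\asymp\epsilon_\mathcal{L} A_\mathcal{L}^{3}/l^{3\delta}_{\min}$, while the restriction $\delta<\tfrac{1}{3}$ serves precisely to keep $R_\mathcal{L}$ of order $\mathcal{L}$.

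Combining the two estimates and using $l_{\min}\propto\mathcal{L}$ gives
\[
 \frac{\omega_\beta(U_N^\mathcal{L})}{\mathcal{L}}\;\gtrsim\;\epsilon_\mathcal{L} A_\mathcal{L}^{3}\,\mathcal{L}^{1-3\delta}-O(1).
\]
In the constant case this is bounded below by $\mathcal{L}^{1-3\delta}\to\infty$; in the non-constant case the hypothesis on $\epsilon_\mathcal{L} A_\mathcal{L}^{3}$ together with the condition $\gamma<1-3\delta$ is precisely what is needed for the right-hand side to diverge as $\mathcal{L}\to\infty$. Either outcome contradicts Lemma~\ref{LemmaEnergy} and thus establishes \eqref{partdense}.

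The principal technical difficulty lies in the operator inequality itself. Because $U_\mathcal{L}(x-y)$ is a diagonal multiplication operator on the two-particle Hilbert space, whereas $b^\ast b^\ast bb$ is essentially a rank-one object supported on $\Phi_j\otimes\Phi_j$, there is no direct comparison between them; one must construct the domination via a smearing with carefully chosen bump functions, paying for each smearing with a power of $A_\mathcal{L}$ and of $l^{\delta}_{\min}$. Organising this bookkeeping so that the resulting coefficient matches the $\epsilon_\mathcal{L} A_\mathcal{L}^{3}$ combination appearing in the hypothesis, while simultaneously keeping the remainder $R_\mathcal{L}$ subextensive, is the crux of the argument and the place where the constraint $\delta<\tfrac{1}{3}$ becomes essential.
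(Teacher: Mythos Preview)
Your overall architecture---assume a macroscopic occupation of $\Phi_j$, use Cauchy--Schwarz in the Gibbs state to pass from $\omega_\beta(\hat n)$ to $\omega_\beta(\hat n^2)$ and hence to $\omega_\beta(b^\ast b^\ast bb)$, and then contradict Lemma~\ref{LemmaEnergy}---is exactly the strategy of the paper. The scaling $\epsilon_\mathcal{L} A_\mathcal{L}^3\,\mathcal{L}^{1-3\delta}$ you arrive at is also the correct one. The gap is in the middle step, where you try to obtain an \emph{operator} inequality $U_N^\mathcal{L}\geq K_\mathcal{L}\,b^\ast b^\ast bb - R_\mathcal{L}$ by smearing and then ``projecting the resulting two-body quadratic form onto the one-dimensional subspace spanned by $\Phi_j$''. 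Projection of a positive quadratic form onto a subspace does not produce a lower bound on the original form; there is no reason the interaction energy, which is diagonal in position, should dominate a rank-one object like $b^\ast b^\ast bb$ as operators, and your sketch does not supply a mechanism that would make this work. This is precisely the difficulty you yourself flag in the final paragraph, and it is not resolved.

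The paper avoids operator inequalities altogether and works entirely at the level of expectation values. It partitions $[0,l_{\min}^\delta]$ into $n\sim l_{\min}^\delta/A_\mathcal{L}$ subintervals $I_i$ of width $A_\mathcal{L}$, so that on each $I_i\times I_i$ the potential is bounded below by $\epsilon_\mathcal{L}$ pointwise. Writing $\Phi_1=\sum_{j,i}\Phi_{ji}$ with $(\Phi_{ji})_e=\delta_{ej}\varphi_j\chi_{I_i}$, two applications of Cauchy--Schwarz for the \emph{state} give
\[
|\omega_\beta(a^\ast(\Phi_{ji})a(\Phi_{lk}))|^4\leq\bigl(C_{ji}+N\|\varphi_j^{(i)}\|^2\bigr)\bigl(C_{lk}+N\|\varphi_l^{(k)}\|^2\bigr),
\]
where $C_{ji}=\int_{I_i}\!\int_{I_i}\omega_\beta(a_j^\ast(x)a_j^\ast(y)a_j(x)a_j(y))\,\ud x\,\ud y$ is exactly the object appearing in the lower bound for $\omega_\beta(U_N^\mathcal{L})$. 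The elementary inequality $|\sum_{m=1}^n a_m|^4\leq n^3\sum_m|a_m|^4$ then converts $|\omega_\beta(a^\ast(\Phi_1)a(\Phi_1))|^4$ into a bound by $(En)^3\cdot(\sum C_{ji}+EN)^2$, and the factor $n^3\sim (l_{\min}^\delta/A_\mathcal{L})^3$ is the source of the $A_\mathcal{L}^3$ and $\mathcal{L}^{3\delta}$ in the final estimate. In short: the partition into $A_\mathcal{L}$-cells together with state-level Cauchy--Schwarz replaces your proposed operator domination, and the discrete H\"older-type inequality $|\sum|^4\leq n^3\sum|\cdot|^4$ is the bookkeeping device that produces the correct powers.
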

\begin{proof} 
We follow the strategy outlined in \cite{PS86}. For this, we partition the interval 
$[0,l^{\delta}_{\min}]$ into $\lceil l^{\delta}_{\min}/A_{\mathcal{L}}\rceil$ sub-intervals 
$I_j=[(j-1)A_{\mathcal{L}},jA_{\mathcal{L}}]$, 
$0 < j < \lceil l^{\delta}_{\min}/A_{\mathcal{L}}\rceil$. In the same way, we partition 
$[l_e-l^{\delta}_{\min},l_e]$ into $\lceil l^{\delta}_{\min}/A_{\mathcal{L}}\rceil$ 
sub-intervals $\tilde{I}_j=[l_e-jA_{\mathcal{L}},l_e-(j-1)A_{\mathcal{L}}]$, 
$0 < j < \lceil l^{\delta}_{\min}/A_{\mathcal{L}}\rceil$. We then estimate:
\begin{equation}
\label{1stest}
\begin{split}
 \frac{\omega_{\beta}(H_N)}{\mathcal{L}} 
  &\geq -|E_0|\frac{N}{\mathcal{L}}\\
  &\quad +\frac{1}{2\mathcal{L}}\sum_{e}\int_{0}^{l_e} \int_{0}^{l_e}
     U_{\mathcal{L}}(x-y)\omega_{\beta}(a_e^\ast(x)a_e^\ast(y)a_e(x)a_e(y))\ 
     \ud x\,\ud y \\
  &\geq -|E_0|\frac{N}{\mathcal{L}}\\
  &\quad +\frac{1}{2\mathcal{L}}\sum_{e}\sum_{j=1}^{\lceil l^{\delta}_{\min}/A_{\mathcal{L}}\rceil}
     \int_{I_j}\int_{I_j} U_{\mathcal{L}}(x-y)
     \omega_{\beta}(a_e^\ast(x)a_e^\ast(y)a_e(x)a_e(y))\ \ud x\,\ud y \\
  &\quad +\frac{1}{2\mathcal{L}}\sum_{e}
     \sum_{j=1}^{\lceil l^{\delta}_{\min}/A_{\mathcal{L}}\rceil}\int_{\tilde{I}_j}\int_{\tilde{I}_j}
     U_{\mathcal{L}}(x-y) \omega_{\beta}(a_e^\ast(x)a_e^\ast(y)a_e(x)a_e(y))
     \ \ud x\,\ud y\ ,
\end{split}
\end{equation}
where $E_0<0$ is the ground-state eigenvalue of the one-particle Laplacian 
$-\Delta_1$. Now, using the lower bound on the potential, we obtain
\begin{equation}
\begin{split}
 \frac{\omega_{\beta}(H_N)}{\mathcal{L}} 
   &\geq -|E_0|\frac{N}{\mathcal{L}}\\
   &\quad +\frac{\epsilon_{\mathcal{L}}}{2\mathcal{L}}\sum_{e}
      \sum_{j=1}^{\lceil l^{\delta}_{\min}/A_{\mathcal{L}}\rceil}\int_{I_j}\int_{I_j}
      \omega_{\beta}(a_e^\ast(x)a_e^\ast(y)a_e(x)a_e(y))\ \ud x\,\ud y \\
   &\quad +\frac{\epsilon_{\mathcal{L}}}{2\mathcal{L}}\sum_{e}
      \sum_{j=1}^{\lceil l^{\delta}_{\min}/A_{\mathcal{L}}\rceil}\int_{\tilde{I}_j}\int_{\tilde{I}_j}
      \omega_{\beta}(a_e^\ast(x)a_e^\ast(y)a_e(x)a_e(y))\ \ud x\,\ud y\ .
\end{split}
\end{equation}
We define $\varphi^{(i)}_e:=\varphi_e\chi_{I_{i}}$ and 
$\tilde{\varphi}^{(i)}_e:=\varphi_e\chi_{\tilde{I}_i}$ 
as the components of functions $\Phi_{ji},\tilde{\Phi}_{ji}\in\cH_1$, such that
$(\Phi_{ji})_e=\delta_{ej}\varphi^{(i)}_j$ and 
$(\tilde{\Phi}_{ji})_e=\delta_{ej}\tilde{\varphi}^{(i)}_j$. However, for simplicity
we restrict our attention in the following to $\Phi_{ji}$. Using the 
Cauchy-Schwarz inequality for the Gibbs state \cite{VerbeureBook} we then obtain,
\begin{equation}
\label{Estimate1}
\begin{split}
 |\omega_{\beta}(a^\ast(\Phi_{ji})a(\Phi_{lk}))|^4 
   &\leq \omega_{\beta}^2(a^\ast(\Phi_{ji})a(\Phi_{ji})) \omega_{\beta}^2
       (a^\ast(\Phi_{lk})a(\Phi_{lk})) \\
   &\leq \omega_{\beta}(a^\ast(\Phi_{ji})a(\Phi_{ji})a^\ast(\Phi_{ji})a(\Phi_{ji})) \\
   &\qquad \cdot \omega_{\beta}(a^\ast(\Phi_{lk})a(\Phi_{lk})a^\ast(\Phi_{lk})
       a(\Phi_{lk}))\ ,
\end{split}
\end{equation}
and 
\begin{equation}
\label{Estimate2}
\begin{split}
 \omega_{\beta}(a^\ast(\Phi_{ji})a(\Phi_{ji})a^\ast(\Phi_{ji})a(\Phi_{ji}))
   &=\omega_{\beta}(a^\ast(\Phi_{ji})a^\ast(\Phi_{ji})a(\Phi_{ji})a(\Phi_{ji}))\\
   &\qquad +\|\varphi^{(i)}_j\|^2_{L^2(0,l_j)}\ 
            \omega_{\beta}(a^\ast(\Phi_{ji})a(\Phi_{ji}))\ .
\end{split}
\end{equation}
Next we establish two useful estimates. First, using the H\"{o}lder and then 
again the Cauchy-Schwarz inequality, yields,
\begin{equation}
\label{Estimate3}
\begin{split}
 \omega_{\beta}(a^\ast(\Phi_{ji})a(\Phi_{ji}))
   &=\int_{0}^{l_j}\int_{0}^{l_j}\varphi^{(i)}_{j}(x)\bar{\varphi}^{(i)}_{j}(y)
      \omega_{\beta}(a_j^\ast(x)a_j(y))\ \ud x\,\ud y \\
   &\leq\int_{0}^{l_j}\omega_{\beta}(a_j^\ast(x)a_j(x))\ \ud x \int_{0}^{l_j}
      |\varphi^{(i)}_{j}(y)|^2\ \ud y \\
   &\leq N \int_{0}^{l_j}|\varphi^{(i)}_{j}(y)|^2\ \ud y \ .
\end{split}
\end{equation}
Second, again using the H\"{o}lder and the Cauchy-Schwarz inequality, 
\begin{equation}
\label{Estimate4}
\begin{split}
 \omega_{\beta}(a^\ast(\Phi_{ji})a^\ast(\Phi_{ji})a(\Phi_{ji})a(\Phi_{ji}))
   &\leq \int_{I_i}\int_{I_i}
      \omega_{\beta}(a_j^\ast(x)a_j^\ast(y)a_j(x)a_j(y))\ \ud y\,\ud x \\
   &\qquad\qquad\cdot \left(\int_{I_i}\ 
      |\varphi^{(i)}_{j}(x)|^2\right)^2\ \ud x\\
   &\leq \int_{I_i}\int_{I_i}
      \omega_{\beta}(a_j^\ast(x)a_j^\ast(y)a_j(x)a_j(y))\ \ud y\,\ud x \\
   &:= C_{ji}\ .
\end{split}
\end{equation}
Combining \eqref{Estimate1}, \eqref{Estimate2}, \eqref{Estimate3} and 
\eqref{Estimate4} we obtain 
\begin{equation}
 \sum_{i,k=1}^{\lceil l^{\delta}_{\min}/A_{\mathcal{L}}\rceil}|\omega_{\beta}(a^\ast(\Phi_{ji})
 a(\Phi_{lk}))|^4 
 \leq \left(\sum_{i=1}^{\lceil l^{\delta}_{\min}/A_{\mathcal{L}}\rceil} C_{ji}+N\right)\left(
 \sum_{k=1}^{\lceil l^{\delta}_{\min}/A_{\mathcal{L}}\rceil} C_{lk}+N\right) \ .
\end{equation}
Using the inequality $|\sum_{j=1}^{n} a_{j}|^4 \leq n^3 \sum_{j=1}^{n}|a_{j}|^4$ then
gives 
\begin{equation}
\begin{split}
 |\omega_{\beta}(a^\ast(\Phi_{1})a(\Phi_{1}))|^4
   &=\left|\sum_{j,l=1}^{E}\sum_{i,k=1}^{\lceil l^{\delta}_{\min}/A_{\mathcal{L}}\rceil }\omega_{\beta}
      (a^\ast(\Phi_{ji})a(\Phi_{lk}))\right|^4 \\
   &\leq 2\frac{E^6l^{6\delta}_{\min}}{A_{\mathcal{L}}^6}\sum_{j,l=1}^{E}
      \sum_{i,k=1}^{\lceil l^{\delta}_{\min}/A_{\mathcal{L}}\rceil}
      |\omega_{\beta}(a^\ast(\Phi_{ji})a(\Phi_{lk})|^4\\
   &\leq 2\frac{E^6l^{6\delta}_{\min}}{A_{\mathcal{L}}^6}\left(\sum_{j=1}^{E}
      \sum_{i=1}^{\lceil l^{\delta}_{\min}/A_{\mathcal{L}}\rceil} C_{ji}+EN\right)^2\ .
\end{split}
\end{equation}
Hence, 
\begin{equation}
\label{Csum}
\begin{split}
 \frac{\epsilon_{\mathcal{L}}}{2\mathcal{L}}\sum_{j=1}^{E}
          \sum_{i=1}^{\lceil l^{\delta}_{\min}/A_{\mathcal{L}}\rceil} C_{ji}
   &\geq \frac{\epsilon_{\mathcal{L}}A^3_{\mathcal{L}}}{2\sqrt{2}E^3l^{3\delta}_{\min}}
      \frac{1}{\mathcal{L}}\omega^2_{\beta}(a^\ast(\Phi_{1})a(\Phi_{1}))-E\rho \\
   &\geq \frac{\epsilon_{\mathcal{L}}A^3_{\mathcal{L}}}{2\sqrt{2}E^{3}\mathcal{L}^{3\delta-1}}
      \Bigl(\frac{\omega_{\beta}(a^\ast(\Phi_{1})a(\Phi_{1}))}{\mathcal{L}}\Bigr)^2
      -E\rho\ .
\end{split}
\end{equation}
Defining
\begin{equation}\begin{split}
 D_{ji}:=\int_{\tilde{I}_i}\int_{\tilde{I}_i}
 \omega_{\beta}(a_j^\ast(x)a_j^\ast(y)a_j(x)a_j(y))\ \ud y\,\ud x\ ,
\end{split}
\end{equation}
one obtains in a similar way, 
\begin{equation}
\label{Dsum}
\begin{split}
 \frac{\epsilon_{\mathcal{L}}}{2\mathcal{L}}\sum_{j=1}^{E}
 \sum_{i=1}^{\lceil l^{\delta}_{\min}/A_{\mathcal{L}}\rceil} D_{ji}
   &\geq \frac{\epsilon_{\mathcal{L}}A^3_{\mathcal{L}}}{2\sqrt{2}E^{3}\mathcal{L}^{3\delta-1}}
     \Bigl(\frac{\omega_{\beta}(a^\ast(\Phi_{2})a(\Phi_{2}))}{\mathcal{L}}\Bigr)^2-
     E\rho\ .
\end{split}
\end{equation}
The right-hand sides of \eqref{Csum} and \eqref{Dsum}, therefore, provide
lower bounds to \eqref{1stest}. We choose $A_{\mathcal{L}},\epsilon_{\mathcal{L}}$ both 
either constant, or such that 
$\epsilon_{\mathcal{L}}A_{\mathcal{L}}^3=O(\mathcal{L}^{3\delta+\gamma-1})$, where 
$0<\gamma < 1-3\delta$. The latter choice is possible as $\delta<\tfrac{1}{3}$. 
Hence, the lower bounds in \eqref{Csum} and \eqref{Dsum} tend to infinity in the 
thermodynamic limit, unless \eqref{partdense} is fulfilled. 
Lemma~\ref{LemmaEnergy}, however, requires the energy density to remain finite, 
hence \eqref{partdense} follows.
\end{proof}
To prove the absence of condensation into the one-particle ground state 
we need the following statement.
\begin{lemma}
\label{PropositionCoefficients} 
Let $\Phi_0=(\varphi_e)_{e\in\mathcal{E}}\in\cH_1$ be the normalised one-particle 
ground state with components $\varphi_e(x)=a_e e^{-\sqrt{|E_0|}x}+b_e e^{+\sqrt{|E_0|}x}$ 
and corresponding eigenvalue $E_0<0$. Then,
\begin{equation}
\label{ReferenceProof1}
 \max_{e \in \mathcal{E}}\sup_{\mathcal{L}}\bigl(|\varphi_e(0)|+|\varphi_e(l_e)|\bigr) 
 < \infty\ ,
\end{equation} 
and the coefficients are such that $|a_e|=O(1)$ and $|b_e|=O(e^{-\sqrt{|E_0|}l_e})$.
\end{lemma}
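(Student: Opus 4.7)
The plan is to use the $L^2$-normalisation of $\Phi_0$, which forces $\|\varphi_e\|_{L^2(0,l_e)}^2 \leq 1$ for every $e\in\mathcal{E}$, together with Lemma~3.3 of \cite{BolteKernerBEC} (the ground-state eigenvalue satisfies $E_0 \to -L_{\max}^2$ with $L_{\max} > 0$ as $\mathcal{L} \to \infty$). Consequently $\kappa := \sqrt{|E_0|}$ stays bounded and bounded away from zero along the thermodynamic limit; for small $\mathcal{L}$ the claimed bounds are automatic by continuity on a compact set of parameters, so one only needs to argue in the regime $\mathcal{L}$ large.

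The key step will be a change of basis to exponentials localised at the two endpoints of each edge. Setting $\tilde b_e := b_e\,e^{\kappa l_e}$, one rewrites $\varphi_e(x) = a_e\,e^{-\kappa x} + \tilde b_e\, e^{-\kappa(l_e - x)}$, where a direct calculation gives
\begin{equation*}
 \alpha := \bigl\|e^{-\kappa\,\cdot}\bigr\|_{L^2(0,l_e)}^2 = \bigl\|e^{-\kappa(l_e - \cdot)}\bigr\|_{L^2(0,l_e)}^2 = \frac{1 - e^{-2\kappa l_e}}{2\kappa}\ ,\qquad \gamma := \bigl\langle e^{-\kappa\,\cdot},\,e^{-\kappa(l_e - \cdot)}\bigr\rangle_{L^2(0,l_e)} = l_e\,e^{-\kappa l_e}\ .
\end{equation*}
Expanding $\|\varphi_e\|^2$ in this basis and absorbing the cross term by AM-GM one then obtains
\begin{equation*}
 1 \geq \|\varphi_e\|_{L^2(0,l_e)}^2 = \alpha(|a_e|^2 + |\tilde b_e|^2) + 2\gamma\,\Re(a_e\bar{\tilde b}_e) \geq (\alpha - \gamma)(|a_e|^2 + |\tilde b_e|^2)\ .
\end{equation*}
Since $\alpha \to 1/(2\kappa) > 0$ while $\gamma \to 0$, the coefficient $\alpha - \gamma$ will eventually exceed $\alpha/2$, so that $|a_e|^2 + |\tilde b_e|^2 = O(1)$ uniformly in $e$. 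Translating back, this yields the required $|a_e| = O(1)$ and $|b_e| = |\tilde b_e|\,e^{-\kappa l_e} = O(e^{-\sqrt{|E_0|}\,l_e})$, after which \eqref{ReferenceProof1} follows at once from $\varphi_e(0) = a_e + \tilde b_e\,e^{-\kappa l_e}$ and $\varphi_e(l_e) = a_e\,e^{-\kappa l_e} + \tilde b_e$, both evidently $O(1)$.

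The main obstacle is the choice of basis. In the original basis $(e^{-\kappa x},\, e^{+\kappa x})$ the Gram matrix on $L^2(0,l_e)$ has condition number of order $e^{2\kappa l_e}$, so $\|\varphi_e\|^2 \leq 1$ alone yields only the weak bound $|b_e| = O(1)$ instead of the claimed exponential decay. The rescaling $b_e \to \tilde b_e = b_e e^{\kappa l_e}$ is precisely what is needed to pass to a basis whose Gram matrix is asymptotically $\alpha\,\mathrm{Id}$, and only in that basis does the AM-GM inequality produce sharp bounds on both coefficients simultaneously.
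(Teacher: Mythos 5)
Your proof is correct and rests on the same basic mechanism as the paper's: the normalisation $\|\Phi_0\|^2=1$ forces $\|\varphi_e\|^2\le 1$ on each edge, and expanding this norm in the two exponentials pins down the coefficients. The difference is in execution. The paper simply writes out $\|\Phi_0\|^2$ in the basis $(e^{-\kappa x},e^{+\kappa x})$ and asserts that normalisation "requires" $|a_e|=O(1)$ and $|b_e|=O(e^{-\kappa l_e})$, without addressing the possibility that a negative cross term $2\re(\bar a_e b_e)\,l_e$ could partially cancel the large diagonal term $|b_e|^2 e^{2\kappa l_e}/(2\kappa)$. Your rescaling $\tilde b_e=b_e e^{\kappa l_e}$, the computation of the Gram matrix $\bigl(\begin{smallmatrix}\alpha&\gamma\\ \gamma&\alpha\end{smallmatrix}\bigr)$ with $\gamma=l_e e^{-\kappa l_e}\to 0$, and the AM--GM bound $\|\varphi_e\|^2\ge(\alpha-\gamma)(|a_e|^2+|\tilde b_e|^2)$ handle exactly this point and make the assertion rigorous; you also correctly note that one needs $\kappa=\sqrt{|E_0|}$ bounded away from zero in the thermodynamic limit (via the convergence $E_0\to -L_{\max}^2$), a point the paper leaves implicit. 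So: same route, but your version closes a small gap in the published argument rather than merely reproducing it.
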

\begin{proof}
The squared norm of the function $\Phi_0=(\varphi_e)_{e\in\mathcal{E}}$ is
\begin{equation}
 \|\Phi_0\|^2 = \sum_e\biggl(\frac{|a_e|^2}{2\sqrt{|E_0|}}\left( 1-\ue^{-2\sqrt{|E_0|}l_e}
 \right)+\frac{|b_e|^2}{2\sqrt{|E_0|}}\left( \ue^{2\sqrt{|E_0|}l_e}-1 \right)
 +2|\re(\bar a_eb_e)|l_e\biggr) \ .
\end{equation}
In order for this to equal one, as $l_e\to\infty$, one has to require that 
$|a_e|=O(1)$ and $|b_e|=O(e^{-\sqrt{|E_0|}l_e})$. Since $\varphi_e(0)=a_e+b_e$ and 
$\varphi_e(l_e)=a_e\ue^{-\sqrt{|E_0|}l_e}+b_e\ue^{\sqrt{|E_0|}l_e}$ the property 
\eqref{ReferenceProof1} follows.
\end{proof}
As a consequence, the one-particle ground state is localised around the
vertices of the graph. This is similar to the model in \cite{PS86} and differs
essentially from the model in \cite{LVZ03}.

We can now formulate the main result of this Note. 
\begin{theorem} 
Let $\Phi_0 \in \cH_1$ be the ground state of the one-particle system. 
Furthermore, let $H_N$ be given with interaction potential $U_{\mathcal{L}}$ as 
in Lemma~\ref{LemmaCutoff}. Then,
\begin{equation}
 \limsup_{\mathcal{L} \rightarrow \infty}\frac{\omega_{\beta}(a^\ast(\Phi_0)a(\Phi_0))}
 {\mathcal{L}}=0\ .
\end{equation} 
Hence, in the interacting many-particle system there is no condensation into the 
one-particle ground state. 
\end{theorem}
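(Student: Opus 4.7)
The plan is to reduce the theorem to Lemma~\ref{LemmaCutoff} by exploiting that the one-particle ground state $\Phi_0$ is, up to an exponentially small $L^2$-remainder in the thermodynamic limit, supported on small neighbourhoods of the vertices. Concretely, applying the construction of Lemma~\ref{LemmaCutoff} to $\Phi=\Phi_0$ produces functions $\Phi_1,\Phi_2\in\cH_1$ supported near the endpoints of each edge, and I would set $\Phi_{\mathrm{b}}:=\Phi_0-\Phi_1-\Phi_2$, whose components are supported on the bulk $[l^{\delta}_{\min},\,l_e-l^{\delta}_{\min}]$ of each edge.

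From the decomposition $a(\Phi_0)=a(\Phi_1)+a(\Phi_2)+a(\Phi_{\mathrm{b}})$ and the elementary operator inequality $A_i^*A_j+A_j^*A_i\leq A_i^*A_i+A_j^*A_j$ applied with $A_k:=a(\Phi_k)$, $k\in\{1,2,\mathrm{b}\}$, one obtains
\begin{equation*}
 \frac{\omega_\beta(a^*(\Phi_0)a(\Phi_0))}{\mathcal{L}}
 \leq 3\sum_{k\in\{1,2,\mathrm{b}\}}\frac{\omega_\beta(a^*(\Phi_k)a(\Phi_k))}{\mathcal{L}}\ .
\end{equation*}
The contributions with $k=1,2$ have vanishing $\limsup$ by a direct application of Lemma~\ref{LemmaCutoff}, so the proof reduces to controlling the bulk term.

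For the bulk term the essential input is Lemma~\ref{PropositionCoefficients}. For $x\in[l^{\delta}_{\min},\,l_e-l^{\delta}_{\min}]$ the bounds $|a_e|=O(1)$ and $|b_e|=O(\ue^{-\sqrt{|E_0|}l_e})$ yield $|a_e\ue^{-\sqrt{|E_0|}x}|$ and $|b_e\ue^{\sqrt{|E_0|}x}|$ both of order $\ue^{-\sqrt{|E_0|}l^{\delta}_{\min}}$, uniformly in $e$. Integrating $|\varphi_e|^2$ over the bulk and summing over the finitely many edges gives $\|\Phi_{\mathrm{b}}\|^2=O(\mathcal{L}\,\ue^{-2\sqrt{|E_0|}l^{\delta}_{\min}})$. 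Extending the Cauchy--Schwarz argument leading to \eqref{Estimate3} from a single edge to all of $\cH_1$, which is possible because the reduced one-particle density matrix $(x,e;y,e')\mapsto\omega_\beta(a_e^*(x)a_{e'}(y))$ is positive semidefinite, yields $\omega_\beta(a^*(\Phi_{\mathrm{b}})a(\Phi_{\mathrm{b}}))\leq N\,\|\Phi_{\mathrm{b}}\|^2$; with $N=\rho\mathcal{L}$ this leaves a bulk contribution of order $\mathcal{L}\,\ue^{-2\sqrt{|E_0|}l^{\delta}_{\min}}$ after dividing by $\mathcal{L}$. Since $l_{\min}\to\infty$ in the thermodynamic limit while $\sqrt{|E_0|}\to L_{\max}>0$, this tends to zero. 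The main point to verify is that the pointwise bound on $\varphi_e$ in the bulk is genuinely uniform in $e$ and in the thermodynamic limit; this is exactly what Lemma~\ref{PropositionCoefficients} delivers, so beyond this the remaining work is routine bookkeeping.
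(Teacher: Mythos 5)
Your proposal is correct and follows essentially the same route as the paper: the same decomposition of $\Phi_0$ into the two edge-end pieces handled by Lemma~\ref{LemmaCutoff} and a bulk piece whose $\cH_1$-norm vanishes by Lemma~\ref{PropositionCoefficients}, combined with the bound $\omega_\beta(a^*(\Phi)a(\Phi))\leq N\|\Phi\|^2$ and a Cauchy--Schwarz-type control of the cross terms. The only cosmetic difference is that you absorb the off-diagonal terms via the operator inequality $A_i^*A_j+A_j^*A_i\leq A_i^*A_i+A_j^*A_j$ rather than bounding each $\omega_\beta(a^*(\Phi_i)a(\Phi_j))$ separately, which is the same estimate in disguise.
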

\begin{proof} 
We use the cut-offs introduced in Lemma~\ref{LemmaCutoff} and write 
$\Phi_0=\Phi_1+\Phi_2+\Phi_3$ where 
$(\Phi_3)_e:=\varphi_{0,e}\chi_{[l^{\delta}_{\min},l_e-l^{\delta}_{\min}]}$. This gives
\begin{equation}
 \frac{\omega_{\beta}(a^\ast(\Phi_0)a(\Phi_0))}{\mathcal{L}}
 =\sum_{i,j=1}^3\frac{\omega_{\beta}(a^\ast(\Phi_i)a(\Phi_j))}{\mathcal{L}}\ .
\end{equation} 
For the diagonal terms, first Lemma~\ref{LemmaCutoff} implies that
\begin{equation}
 \limsup_{\mathcal{L} \rightarrow \infty}\frac{\omega_{\beta}(a^\ast(\Phi_i)a(\Phi_i))}
 {\mathcal{L}}=0\ , \quad \quad i=1,2\ . 
\end{equation}
Then, Lemma~\ref{PropositionCoefficients} yields that 
$\lim_{\mathcal{L}\to\infty}\|\Phi_3\|_{\cH_1}=0$. Using the Cauchy-Schwarz inequality we 
hence obtain
\begin{equation}
 \limsup_{\mathcal{L} \rightarrow \infty}\frac{\omega_{\beta}(a^\ast(\Phi_3)a(\Phi_3))}
 {\mathcal{L}} \leq \limsup_{\mathcal{L} \rightarrow \infty} \frac{N}{\mathcal{L}} 
 \|\Phi_3\|^2_{\cH_1}=0\ .
\end{equation}
Using the Cauchy-Schwarz inequality the off-diagonal terms can be bounded by
the diagonal terms,
\begin{equation}
 \frac{\omega_{\beta}(a^\ast(\Phi_i)a(\Phi_j))}{\mathcal{L}} \leq 
 \frac{\omega_{\beta}(a^\ast(\Phi_i)a(\Phi_i))}{\mathcal{L}}+
 \frac{\omega_{\beta}(a^\ast(\Phi_j)a(\Phi_j))}{\mathcal{L}}\ ,
\end{equation} 
which concludes the proof.
\end{proof}

\vspace*{0.5cm}

\subsection*{Acknowledgement}
This work was supported by the EPSRC network {\it Analysis on Graphs}
(EP/1038217/1).

\vspace*{0.5cm}

{\small
\bibliographystyle{amsalpha}
\bibliography{Literature}}

\def\cprime{$'$} \def\polhk#1{\setbox0=\hbox{#1}{\ooalign{\hidewidth
  \lower1.5ex\hbox{`}\hidewidth\crcr\unhbox0}}}
\providecommand{\bysame}{\leavevmode\hbox to3em{\hrulefill}\thinspace}
\providecommand{\MR}{\relax\ifhmode\unskip\space\fi MR }
% \MRhref is called by the amsart/book/proc definition of \MR.
\providecommand{\MRhref}[2]{%
  \href{http://www.ams.org/mathscinet-getitem?mr=#1}{#2}
}
\providecommand{\href}[2]{#2}
\begin{thebibliography}{BdSP83}

\bibitem[AP87]{PuleAonghusa87}
P.M. Aonghusa and J.V. Pul\'{e}, \emph{Hard cores destroy {B}ose-{E}instein
  condensation}, Lett. Math. Phys. \textbf{14} (1987), no.~2, 117--121.

\bibitem[BdSP83]{BSP83}
E.~Buffet, P.~de~Smedt, and J.~V. Pul\`{e}, \emph{The condensate equation for
  some {B}ose systems}, J. Phys. A: Math. Gen. \textbf{16} (1983), no.~18,
  4307--4324.

\bibitem[BE09]{BolEnd09}
J.~Bolte and S.~Endres, \emph{The trace formula for quantum graphs with general
  self-adjoint boundary conditions}, Ann. H. Poincar\'e \textbf{10} (2009),
  189--223.

\bibitem[BGS84]{BohGiaSch84}
O.~Bohigas, M.~J. Giannoni, and C.~Schmit, \emph{{Characterization of chaotic
  quantum spectra and universality of level fluctuation laws}}, {Phys. Rev.
  Lett.} \textbf{52} (1984), 1--4.

\bibitem[BK13a]{BerKuc13}
G.~Berkolaiko and P.~Kuchment, \emph{Introduction to quantum graphs},
  Mathematical Surveys and Monographs, vol. 186, American Mathematical Society,
  Providence, RI, 2013.

\bibitem[BK13b]{BKSingular}
J.~Bolte and J.~Kerner, \emph{Quantum graphs with singular two-particle
  interactions}, J. Phys. A: Math. Theor. \textbf{46} (2013), no.~4, 045206.

\bibitem[BK13c]{BKContact}
\bysame, \emph{Quantum graphs with two-particle contact interactions}, J. Phys.
  A: Math. Theor. \textbf{46} (2013), no.~4, 045207.

\bibitem[BK14]{BolteKernerBEC}
\bysame, \emph{Many-particle quantum graphs and {B}ose-{E}instein
  condensation}, J. Math. Phys. \textbf{55} (2014), no.~6, 061901.

\bibitem[BP85]{BP85}
E.~Buffet and J.~V. Pul\'{e}, \emph{A hard core {B}ose gas}, J. Stat. Phys.
  \textbf{40} (1985), 631--653.

\bibitem[BP86]{BP86}
\bysame, \emph{Hard bosons in one dimension}, Ann. Inst. H. Poincar\'{e}
  \textbf{44} (1986), 327--340.

\bibitem[dS86]{PS86}
P.~de~Smedt, \emph{The effect of repulsive interactions on {B}ose-{E}instein
  condensation}, J. Stat. Phys. \textbf{45} (1986), 201--213.

\bibitem[Ein24]{EinsteinBECI}
A.~Einstein, Sitzber. {K}gl. {P}reuss. {A}kadm. {W}iss. (1924), 261--267.

\bibitem[Ein25]{EinsteinBECII}
\bysame, Sitzber. {K}gl. {P}reuss. {A}kadm. {W}iss. (1925), 3--14.

\bibitem[Gal99]{GallavottiBook}
G.~Gallavotti, \emph{Statistical mechanics}, Springer-Verlag,
  Berlin-Heidelberg, 1999.

\bibitem[GS06]{GnuSmi07}
S.~Gnutzman and U.~Smilansky, \emph{{Quantum graphs: application to quantum
  chaos and universal spectral statistics}}, Adv. in Phys. \textbf{55} (2006),
  527 -- 625.

\bibitem[Hoh67]{Hohenberg}
P.~C. Hohenberg, \emph{Existence of long-range order in one and two
  dimensions}, Phys. Rev. \textbf{158} (1967), 383--386.

\bibitem[JLZ03]{LVZ03}
A.~Verbeure J.~Lauwers and V.A. Zagrebnov, \emph{Proof of {B}ose-{E}instein
  condensation for interacting gases with a one-particle gap}, J. Phys. A:
  Math. Gen. \textbf{36} (2003), 169--174.

\bibitem[KS97]{KS97}
T.~Kottos and U.~Smilansky, \emph{Quantum chaos on graphs}, Phys. Rev. Lett.
  \textbf{79} (1997), 4794--4797.

\bibitem[KS99]{KosSch99}
V.~Kostrykin and R.~Schrader, \emph{Kirchhoff's rule for quantum wires}, J.
  Phys. A: Math. Gen. \textbf{32} (1999), 595--630.

\bibitem[KS06]{KS06}
V.~Kostrykin and R.~Schrader, \emph{Laplacians on metric graphs: Eigenvalues,
  resolvents and semigroups}, Contemp. Math. \textbf{415} (2006), 201--225.

\bibitem[Kuc04]{Kuc04}
P.~Kuchment, \emph{Quantum graphs. {I}. {S}ome basic structures}, Waves Random
  Media \textbf{14} (2004), S107--S128.

\bibitem[LL63]{LL63}
E.~Lieb and W.~Liniger, \emph{Exact analysis of an interacting {B}ose gas. {I}.
  {T}he general solution and the ground state}, Phys. Rev. \textbf{130} (1963),
  1605--1616.

\bibitem[LS02]{LiebSeiringerProof}
E.~H. Lieb and R.~Seiringer, \emph{Proof of {B}ose-{E}instein condensation for
  dilute trapped gases}, Phys. Rev. Lett. \textbf{88} (2002), 170409.

\bibitem[LSSY05]{Lieetal05}
E.~H. Lieb, R.~Seiringer, J.~P. Solovej, and J.~Yngvason, \emph{The mathematics
  of the {B}ose gas and its condensation}, Oberwolfach Seminars, vol.~34,
  Birkh\"auser Verlag, Basel, 2005.

\bibitem[LW79]{LanWil79}
L.~J. Landau and I.~F. Wilde, \emph{On the {B}ose-{E}instein condensation of an
  ideal gas}, Comm. Math. Phys. \textbf{70} (1979), no.~1, 43--51.

\bibitem[PO56]{PO56}
O.~Penrose and L.~Onsager, \emph{{B}ose-{E}instein condensation and liquid
  helium}, Phys. Rev. \textbf{104} (1956), 576--584.

\bibitem[SYZ12]{SeiYngZag12}
R.~Seiringer, J.~Yngvason, and V.~A. Zagrebnov, \emph{Disordered
  {B}ose-{E}instein condensates with interaction in one dimension}, J. Stat.
  Mech.: Theory and Experiment \textbf{2012} (2012), no.~11, P11007.

\bibitem[Ver11]{VerbeureBook}
A.~F. Verbeure, \emph{Many-body boson systems}, Theoretical and Mathematical
  Physics, Springer-Verlag London, Ltd., London, 2011, Half a century later.

\bibitem[ZB01]{ZagBru}
V.~A. Zagrebnov and J.~B. Bru, \emph{The {B}ogoliubov model of weakly imperfect
  {B}ose gas}, Phys. Rep. \textbf{350} (2001), no.~5�6, 291 -- 434.

\end{thebibliography}

\end{document}